\begin{document}
%
\title{Wideband Hybrid Precoding for Next-Generation Backhaul/Fronthaul Based on mmWave FD-MIMO}



%
\author{\IEEEauthorblockN{Yiwei Sun\IEEEauthorrefmark{1},
Zhen Gao\IEEEauthorrefmark{2},
Hua Wang\IEEEauthorrefmark{1}, and
Di Wu\IEEEauthorrefmark{3}}
\IEEEauthorblockA{\IEEEauthorrefmark{1}School of Information and Electronics, Beijing Institute of Technology, Beijing, China}
\IEEEauthorblockA{\IEEEauthorrefmark{2}Advanced Research Institute of Multidisciplinary Science, Beijing Institute of Technology, Beijing, China}
\IEEEauthorblockA{\IEEEauthorrefmark{3}China Academy of Information and Communications Technology, Beijing, China\\
\{sunyiwei, gaozhen16, wanghua\}@bit.edu.cn}}


\maketitle

\begin{abstract}
Millimeter-wave (mmWave) communication is considered as an indispensable technique for the next-generation backhaul/fronthaul network thanks to its large transmission bandwidth. Especially for heterogeneous network (HetNet), the mmWave full-dimension (FD)-MIMO is exploited to establish the backhaul/fronthaul link between phantom-cell base stations (BSs) and macro-cell BSs, where an efficient precoding is prerequisite. Against this background, this paper proposes a principle component analysis (PCA)-based hybrid precoding for wideband mmWave MIMO backhaul/fronthaul channels. We first propose an optimal hybrid precoder by exploiting principal component analysis (PCA), whereby the optimal high dimensional
frequency-selective precoder are projected
to the low-dimensional frequency-flat precoder. Moreover, the combiner is designed by leveraging the weighted PCA, where the covariance of received signal is taken into account as weight to the optimal minimum mean square error (MMSE) fully-digital combiner for further improved performance. Simulations have confirmed that the proposed scheme outperforms conventional schemes in spectral efficiency (SE) and bit-error-rate
(BER) performance.
\end{abstract}

\begin{IEEEkeywords}
Backhaul/fronthaul, hybrid precoding, wideband FD-MIMO, millimeter wave, principle component analysis.
\end{IEEEkeywords}

%
\IEEEpeerreviewmaketitle

\section{Introduction}
International telecommunications union (ITU) has reached the consensus that the next-generation mobile communications will realize the goals of 1000-fold system
capacity, 100-fold energy efficiency, and 10-fold
lower latency \cite{G1,G2}. To fullfill this explosive demand of capacity, millimeter-wave (mmWave) communication with the large transmission bandwidth is usually considered to support the high-capacity backhaul/fronthaul links between phantom-cell base stations (BSs) and macro-cell BSs \cite{R3}. A typical heterogeneous network (HetNet) can be illustrated in Fig. 1 \cite{bkh}. However, mmWave communication usually suffers from the severe path loss. Traditional fully-digital precoding with massive antennas can be used to mitigate the severe path loss, but  at the cost of high power consumption and hardware cost \cite{ref_plus_5,R1,R2,R5}. To combat this issue, hybrid MIMO architecture with the much lower number of radio frequency (RF) chains than that of antennas are employed with reasonable cost and power consumption \cite{ref_plus_9,ref_plus_10,R4,R6}. Owing to the frequency flat RF precoder/combiner with constant-modulus constraint but the practical frequency selective fading mmWave channels, hybrid analog/digital precoding design can be challenging.
\begin{figure}[htb]
	\centering
	\includegraphics[width=0.7\columnwidth, keepaspectratio]{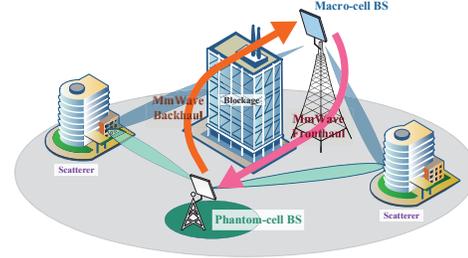}
	\caption{MmWave FD-MIMO based backhaul/fronthaul in HetNet.}
\end{figure}

Most prior mmWave hybrid precoding techniques are based on narrowband mmWave channels \cite{OMP,mao}. Specifically, a compressive sensing (CS)-based
hybrid precoding has been proposed in \cite{OMP}, where the channel sparsity is ingeniously exploited to design hybrid precoding with the aid of orthogonal matching pursuit (OMP) algorithm. To improve bit-error-rate (BER), an over-sampling codebook-based hybrid minimum sum-mean-square-error precoding is designed \cite{mao}. However in practical scenario, wideband scenario like OFDM is adopted more often \cite{lim_feedback,170825,fre_flat}. To be
specific, an insightful wideband hybrid precoder based on limited-feedback codebook has been proposed \cite{lim_feedback}. By exploiting the channel correlation information among different subcarriers, a wideband hybrid precoding is proposed not only for fully-connected structure, but also extended to the partially-connected ones \cite{170825}. Nevertheless, \cite{lim_feedback} fails to give the specific hybrid combiner at the receiver, and \cite{170825} assumes the unpractical fully-digital MIMO at the receiver. Finally, \cite{fre_flat} has theoretically shown the optimality of frequency flat precoding by proving that dominant subspaces of the frequency domain channel matrices of different subcarriers are equivalent. However, this conclusion is based on purely sparse channels with discrete angles of arrival (AoA) and angles of departure (AoD), and the practical precoder/combiner solution is not provided.

In this paper, we propose a principle component analysis (PCA) based wideband hybrid precoding scheme, which can efficiently support the mmWave FD-MIMO based backhaul/fronthaul links. To be specific, we propose the optimal PCA-based analog precoder scheme, where the low-dimensional signal space of frequency-flat RF precoder can be abstracted from the high-dimensional signal space of optimal frequency-selective precoders. Besides, the optimality of the proposed PCA-based hybrid precoding design is theoretically proven and verified. Moreover, we propose corresponding optimal weighted-PCA-based analog combiner design by extracting from fully-digital MMSE combiners. Furthermore, with the use of MMSE and equalization at digital combiner, the BER performance is improved. Simulation results show that our proposed precoding scheme have better spectral efficiency (SE) and BER performance compared to the conventional scheme.

\textsl{Notations}: Following notations are used throughout this paper. $\mathbf{A}$ is a matrix, $\mathbf{a}$ is a vector, and $a$ is a scalar. Conjugate transpose and transpose of $\mathbf{A}$ are $\mathbf{A}^H$ and $\mathbf{A}^T$, respectively. The $(i,j)$th entry of $\mathbf{A}$ is $[\mathbf{A}]_{i,j}$, and $[\mathbf{A}]_{i,:}$ ($[\mathbf{A}]_{:,j}$) denotes the $i$th row ($j$th column) of $\mathbf{A}$. Frobenius norm is denoted by $||\cdot||_F$. $|\mathbf{A}|$, $|\mathbf{a}|$, and $|a|$ are the determinant of a square matrix $\mathbf{A}$, $\ell_2$-norm of a vector $\mathbf{a}$, and modulus of a number $a$, respectively. The $i$th largest singular value of a matrix $\mathbf{A}$ is defined as $\lambda_i(\mathbf{A})$. Finally, $\mathfrak{R}\{\mathbf{A}\}$ means to keep the real part of a complex matrix $\mathbf{A}$. Additionally,
$\text{blkdiag}(\mathbf{a}_1,\cdots,\mathbf{a}_K)$ is a block diagonal matrix with $\mathbf{a}_i$ ($1\leq i\leq K$) on its diagonal blocks.

\section{System Model}
We consider an mmWave FD-MIMO system 
where both the transmitter and receiver employ the uniform planar array (UPA), and OFDM is adopted to combat the frequency-selective fading in backhaul/fronthaul channels. The transmitter is equipped with $N_t=N_t^v\times N_t^h$ antennas and $N_t^{\rm RF}\ll N_t$ chains, where $N_t^v$ and $N_t^h$ are the numbers of vertical and horizontal transmit antennas, respectively. The receiver is equipped with $N_r=N_r^v\times N_r^h$ antennas and $N_r^{\rm RF}\ll N_r$ RF chains, where $N_r^v$ and $N_r^h$ are the numbers of vertical and horizontal receive antennas, respectively. Additionally, there are $N_s$ streams transmitting in the system. We consider the downlink transmission and the received symbols at the receiver can be written as \cite{OMP}
\begin{equation}\label{rcv_sig}
\mathbf{r}[k]=(\mathbf{W}_{\rm RF}\mathbf{W}_{\rm BB}[k])^H(\mathbf{H}[k]\mathbf{F}_{\rm RF}\mathbf{F}_{\rm BB}[k]\mathbf{x}[k]+\mathbf{n}[k]),
\end{equation}
where $1\leq k\leq K$ with $K$ being the number of subcarriers, $\mathbf{F}_{\rm BB}[k]\in\mathbb{C}^{N_t^{\rm RF}\times N_s}$, $\mathbf{F}_{\rm RF}\in\mathbb{C}^{N_t\times N_t^{\rm RF}}$, $\mathbf{W}_{\rm BB}[k]\in\mathbb{C}^{N_r^{\rm RF}\times N_s}$, $\mathbf{W}_{\rm RF}\in\mathbb{C}^{N_r\times N_r^{\rm RF}}$, $\mathbf{H}[k]\in\mathbb{C}^{N_r\times N_t}$, $\mathbf{x}[k]\in\mathbb{C}^{N_s\times 1}$, and $\mathbf{n}[k]\in\mathbb{C}^{N_r\times 1}$ are the digital precoder, analog precoder, digital combiner, analog combiner, channel matrix, transmitted signal, and noise associated with the $k$th subcarrier, respectively. Noise $\mathbf{n}[k]$ satisfies $\mathbf{n}[k]\sim \mathcal{CN}(0,\sigma_n^2)$ and transmitted signal $\mathbf{x}[k]$ satisfies $\mathbb{E}[\mathbf{x}[k]\mathbf{x}^H[k]]=\frac{P}{KN_s}$, where $P$ is the average total transmit power.

The frequency-domain channel $\mathbf{H}[k]$ can be expressed as $\mathbf{H}[k]=\sum_{d=0}^{D-1}\mathbf{H}_d[d]e^{-j\frac{2\pi k}{K}d}$ \cite{lim_feedback}, where $D$ is the maximum delay spread of the discretized channels, and $\mathbf{H}_d[d]\in\mathbb{C}^{N_r\times N_t}$ is the delay-$d$ channel matrix. We consider the clustered channel model \cite{OMP}, where the channel is composed by $N_{\rm cl}$ clusters of multipaths with $N_{\rm ray}$ rays in each cluster. Thus the delay-$d$ channel matrix can be written as
\begin{equation}\label{H_d}
  \mathbf{H}_d[d]\!\!=\!\!\sum\nolimits_{i=1}^{N_{\rm cl}}\!\sum\nolimits_{l=1}^{N_{\rm ray}}\!\!p_{i,l}^d[d]\mathbf{a}_r(\phi^r_{i,l},\theta^r_{i,l})\mathbf{a}_t^H(\phi^t_{i,l},\theta^t_{i,l}),
\end{equation}
where $p_{i,l}^d[d]=\sqrt{N_tN_r/(N_{\rm cl}N_{\rm ray})}\alpha_{i,l}p(dT_s-\tau_{i,l})$ is the delay-domain channel coefficient, $\tau _{i,l}$,  $\alpha _{i,l}$, and $p(\tau)$ are the delay, the complex path gain, and the pulse shaping filter for $T_s$-spaced signaling, respectively. Thus the relationship between the frequency-domain channel coefficiency and the delay-domain channel coefficiency is $p_{i,l}[k]=\sum_{d=0}^{D-1}p_{i,l}[d]\exp(-j2\pi kd/K)$. In (\ref{H_d}), $\mathbf{a}_t(\phi^t_{i,l},\theta^t_{i,l})$ and $\mathbf{a}_r(\phi^r_{i,l},\theta^r_{i,l})$ are the steering vectors of the $l$th path in the $i$th cluster at the transmitter and receiver, respectively. In the steering vectors, $\phi^t_{i,l}$ and $\theta^t_{i,l}$ are the azimuth and elevation angles of the $l$th ray in the $i$th cluster for AoDs, and $\phi^r_{i,l}$ and $\theta^r_{i,l}$ are the azimuth and elevation angles of the $l$th ray in the $i$th cluster for AoAs. Therefore, the transmit steering vectors for the UPA at the transmitter can be expressed as $\mathbf{a}_t(\phi^t_{i,l},\theta^t_{i,l})=
[1\ \ \ \cdots \ \ \ e^{-j2\pi(m\frac{d_h}{\lambda}\sin(\theta ^t_{i,l})\cos(\phi ^t_{i,l})+n\frac{d_v}{\lambda}\sin(\phi ^t_{i,l}))} \ \ \
\cdots \\e^{-j2\pi((N_t^h-1)\frac{d_h}{\lambda}\sin(\theta ^t_{i,l})\cos(\phi ^t_{i,l})+(N_t^v-1)\frac{d_v}{\lambda}\sin(\phi ^t_{i,l}))}]^T/\sqrt{N_t}$ \cite{OMP},
where $\lambda$ is the carrier wavelength, and $d_v$ and $d_h$ are the distances between adjacent antenna elements in vertical and horizontal direction, respectively. Similarly, we can also obtain $\mathbf{a}_r(\phi ^r_{i,l},\theta^r_{i,l})$ with the same form.

Assuming the Gaussian channels, the achieved SE can be expressed as \cite{lim_feedback}
\begin{equation}
\begin{aligned}
R=\tfrac{1}{K}&\sum\nolimits_{k=1}^{K}\log_2|\mathbf{I}+
\tfrac{\rho}{N_s}\mathbf{R}_n^{-1}[k]\mathbf{W}_{\rm BB}^H[k]\mathbf{W}_{\rm RF}^H
\mathbf{H}[k]
\\&\times\mathbf{F}_{\rm RF}\mathbf{F}_{\rm BB}[k]\mathbf{F}_{\rm BB}^H[k]\mathbf{F}_{\rm RF}^H\mathbf{H}^H[k]
\mathbf{W}_{\rm RF}\mathbf{W}_{\rm BB}[k]|,
\end{aligned}
\end{equation}
where $\mathbf{R}_n[k]=\sigma_n^2\mathbf{W}_{\rm BB}^H[k]\mathbf{W}_{\rm RF}^H
\mathbf{W}_{\rm RF}\mathbf{W}_{\rm BB}[k]$ and $\rho=\frac{P}{K\sigma_n^2}$ is the signal-to-noise ratio (SNR). It is worthy to point out that our work is different from the previous work \cite{170825} with the hybrid precoder considered at the transmitter but fully-digital combiner assumed at the receiver. In this paper, we consider the hybrid MIMO architecture at both the transmitter and receiver. Our goal is to design the hybrid precoder and combiner that maximizes the SE. However, it requires to jointly optimize $R$ over variables ($\mathbf{F}_{\rm RF}$,$\{\mathbf{F}_{\rm BB}[k]\}_{k=1}^K$,$\mathbf{W}_{\rm RF}$,$\{\mathbf{W}_{\rm BB}[k]\}_{k=1}^K$) simultaneously, which is challenging. In the following sections, we will decouple the design of precoder and combiner to solve this intractable problem.

\section{Hybrid Precoder Design at Transmitter}
In this section, we discuss the design of the hybrid precoder/combiner for wideband mmWave MIMO backhaul/fronthaul channels. Our goal is to design the optimal frequency-flat RF precoder (combiner) from the optimal fully-digital precoder (combiner) for frequency-selective channels.
\subsection{Digital Precoder Design}
We first design the digital precoder by fixing the RF precoder. Specifically, we design the precoder to maximize the mutual information of the signalling as the following optimization problem
\begin{equation}
\begin{aligned}\label{opt}
\max\limits_{\mathbf{F}_{\rm RF},\mathbf{F}_{\rm BB}}&\!\!\sum\nolimits_{k=1}^{K}\log_2\!|\mathbf{I}\!+\!\tfrac{1}{\sigma_n^2}\mathbf{H}[k]\mathbf{F}_{\rm RF}\mathbf{F}_{\rm BB}[k]
\mathbf{F}_{\rm BB}^H\![k]\mathbf{F}_{\rm RF}^H\mathbf{H}^H\![k]|
\\\text{s.t. }&\mathbf{F}_{\rm RF}\in\mathcal{F}_{\rm RF},\sum\nolimits_{k=1}^K||\mathbf{F}_{\rm RF}\mathbf{F}_{\rm BB}[k]||_F^2=KN_s,
\end{aligned}
\end{equation}
where $\mathcal{F}_{\rm RF}$ is a set of feasible RF precoder satisfying constant-modulus constraint. The joint optimization of $\mathbf{F}_{\rm RF}$ and $\{\mathbf{F}_{\rm BB}[k]\}_{k=1}^K$ in (\ref{opt}) can still be difficult due to the coupling between the baseband and RF precoders \cite{lim_feedback}. Therefore, we consider $\mathbf{\widetilde{F}}_{\rm BB}[k]=(\mathbf{F}_{\rm RF}^H\mathbf{F}_{\rm RF})^{\frac{1}{2}}\mathbf{F}_{\rm BB}[k]$ to be the equivalent baseband precoder, and
the optimization problem (\ref{opt}) is equivalent to \begin{equation}
\begin{aligned}\label{opt_RF}
\max\limits_{\mathbf{F}_{\rm RF},\mathbf{\widetilde{F}}_{\rm BB}}&\sum\nolimits_{k=1}^{K}\log_2|\mathbf{I}+\tfrac{1}{\sigma_{\rm n}^2}
\mathbf{H}[k]\mathbf{F}_{\rm RF}(\mathbf{F}_{\rm RF}^H\mathbf{F}_{\rm RF})^{-\frac{1}{2}}
\\&\times\mathbf{\widetilde{F}}_{\rm BB}[k]\mathbf{\widetilde{F}}_{\rm BB}^H[k](\mathbf{F}_{\rm RF}^H\mathbf{F}_{\rm RF})^{-\frac{1}{2}}\mathbf{F}_{\rm RF}^H\mathbf{H}^H[k]|
\\\text{s.t. }&\mathbf{F}_{\rm RF}\in\mathcal{F}_{\rm RF},\sum\nolimits_{k=1}^K||\mathbf{\widetilde{F}}_{\rm BB}[k]||_F^2=KN_s.
\end{aligned}
\end{equation}

For the optimization problem (\ref{opt_RF}), we first consider the optimal solution of $\{\mathbf{\widetilde{F}}_{\rm BB}[k]\}_{k=1}^K$. Specifically, consider the singular value decomposition (SVD) of $\mathbf{H}[k]$ associated with the $k$th subcarrier as $\mathbf{H}[k]=\mathbf{U}[k]\mathbf{\Sigma}[k]\mathbf{V}^H[k]$, and the SVD of the matrix $\mathbf{\Sigma}[k]\mathbf{V}^H[k]\mathbf{F}_{\rm RF}(\mathbf{F}_{\rm RF}^H\mathbf{F}_{\rm RF})^{-1/2}=\mathbf{\widetilde{U}}[k]\mathbf{\widetilde{\Sigma}}[k]
\mathbf{\widetilde{V}}^H[k]$. Therefore, the optimal $\mathbf{\widetilde{F}}_{\rm BB}[k]=[\mathbf{\widetilde{V}}[k]]_{:,1:N_s}\mathbf{\Lambda}[k]$, and thus the optimal baseband precoder $\mathbf{F}_{\rm BB}[k]$ can be expressed as
\begin{equation}\label{F_BB}
\begin{aligned}
\mathbf{F}_{\rm BB}[k]=&(\mathbf{F}_{\rm RF}^H\mathbf{F}_{\rm RF})^{-\frac{1}{2}}\mathbf{\widetilde{F}}_{\rm BB}[k]
\\=&(\mathbf{F}_{\rm RF}^H\mathbf{F}_{\rm RF})^{-\frac{1}{2}}[\mathbf{\widetilde{V}}[k]]_{:,1:N_s}\mathbf{\Lambda}[k],
\end{aligned}
\end{equation}
where $\mathbf{\Lambda}[k]=(\mu -N_s/[\mathbf{\widetilde{\Sigma}}[k]]_{i,i}^2)^+$ ($1\leq i\leq N_s$, $1\leq k\leq K$) is a water-filling solution matrix, in which $\mu$ satisfies $\sum_{k=1}^K\sum_{i=1}^{N_s}(\mu-N_s/[\mathbf{\widetilde{\Sigma}}[k]]_{i,i}^2)^+=KN_s$.
Then the problem reduces to obtain the optimal solution of $\mathbf{F}_{\rm RF}$ to (\ref{opt_RF}).
\subsection{PCA-Based RF Precoder Design}
Previous work \cite{fre_flat} has shown that the frequency domain MIMO channel matrices $\{\mathbf{H}[k]\}_{k=1}^K$ have the same column space and row space. Meanwhile, the frequency-flat RF precoder $\mathbf{F}_{\rm RF}$ remains unchanged for all subcarriers. So the RF precoder can be regarded as a representation of such a column space. This observation motivates us to design the RF precoder by leveraging the PCA \cite{mach_lern}. Specifically, we first define the optimal digital precoder $\mathbf{F}_{\rm opt}[k]=[\mathbf{V}[k]]_{:,1:N_s}$ for $1\leq k\leq K$. Moreover, we regard the matrix $\mathbf{F}=\begin{bmatrix}\mathbf{F}_{\rm opt}[1]\ \mathbf{F}_{\rm opt}[2]\ \cdots\ \mathbf{F}_{\rm opt}[K]\end{bmatrix}$
consisting of the optimal precoders of all subcarriers as the data set
in the PCA problem. Additionally, to achieve the stable solution with low complexity for PCA,
SVD is applied to the data set matrix $\mathbf{F}$ \cite{mach_lern}. This process is detailed in Proposition 1, where its optimality is also verified as follows.
\newtheorem{prop}{\textbf{Proposition}}
\begin{prop}
    Considering $\mathbf{F}=\begin{bmatrix}\mathbf{F}_{\rm opt}[1]\ \mathbf{F}_{\rm opt}[2]\ \cdots\ \mathbf{F}_{\rm opt}[K]\end{bmatrix}$ and its SVD $\mathbf{F}=\mathbf{U}_F\mathbf{\Sigma}_F\mathbf{V}_F^H$, the solution to (\ref{opt_RF}) can be expressed as $\mathbf{F}_{\rm RF}=[\mathbf{U}_F]_{:,1:N_t^{\rm RF}}\mathbf{R}_t$, where $\mathbf{R}_t\in\mathbb{C}^{N_t^{\rm RF}
	\times N_t^{\rm RF}}$ is an arbitrary full rank matrix.
\end{prop}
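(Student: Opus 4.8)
The plan is to turn the RF--precoder optimization in (\ref{opt_RF}) into the selection of a single $N_t^{\rm RF}$-dimensional subspace and then identify the optimal subspace by a principal-component argument on the stacked optimal precoders $\mathbf{F}$. First I would note that the objective of (\ref{opt_RF}) depends on $\mathbf{F}_{\rm RF}$ only through the semi-unitary factor $\mathbf{A}=\mathbf{F}_{\rm RF}(\mathbf{F}_{\rm RF}^H\mathbf{F}_{\rm RF})^{-1/2}$, which obeys $\mathbf{A}^H\mathbf{A}=\mathbf{I}$ and spans the same column space $\mathcal{S}$ as $\mathbf{F}_{\rm RF}$. Since $\mathbf{\widetilde{F}}_{\rm BB}[k]$ is subsequently optimized, replacing one orthonormal basis of $\mathcal{S}$ by another (a unitary right-rotation of $\mathbf{A}$) can be absorbed into $\mathbf{\widetilde{F}}_{\rm BB}[k]$ without changing either the product $\mathbf{A}\mathbf{\widetilde{F}}_{\rm BB}[k]$ or its Frobenius norm; hence (\ref{opt_RF}) is in fact a function of the subspace $\mathcal{S}$ alone. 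This already accounts for the arbitrary full-rank factor $\mathbf{R}_t$ in the claimed solution: right-multiplying $[\mathbf{U}_F]_{:,1:N_t^{\rm RF}}$ by $\mathbf{R}_t$ leaves $\mathcal{S}$ unchanged, so it cannot affect the attained value.

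Next I would use the equal-subspace property invoked earlier from \cite{fre_flat}: the dominant right singular subspaces of all $\{\mathbf{H}[k]\}_{k=1}^K$ coincide with one common $N_s$-dimensional subspace $\mathcal{V}$, so every column of every $\mathbf{F}_{\rm opt}[k]=[\mathbf{V}[k]]_{:,1:N_s}$ lies in $\mathcal{V}$. Consequently the data matrix $\mathbf{F}=[\,\mathbf{F}_{\rm opt}[1]\ \cdots\ \mathbf{F}_{\rm opt}[K]\,]$ has column space exactly $\mathcal{V}$ and rank $N_s$, its SVD $\mathbf{F}=\mathbf{U}_F\mathbf{\Sigma}_F\mathbf{V}_F^H$ concentrates all its energy in the first $N_s$ singular values, and $[\mathbf{U}_F]_{:,1:N_s}$ is an orthonormal basis of $\mathcal{V}$. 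Taking $\mathbf{F}_{\rm RF}=[\mathbf{U}_F]_{:,1:N_t^{\rm RF}}\mathbf{R}_t$ with $N_t^{\rm RF}\geq N_s$ then yields $\mathcal{S}\supseteq\mathcal{V}$, so each $\mathbf{F}_{\rm opt}[k]$ lies in $\mathcal{S}$ and there exists $\mathbf{F}_{\rm BB}[k]$ with $\mathbf{F}_{\rm RF}\mathbf{F}_{\rm BB}[k]=\mathbf{F}_{\rm opt}[k]$. The hybrid precoder therefore reproduces the unconstrained water-filling precoder on every subcarrier and attains the fully-digital value of the sum-rate, which upper-bounds (\ref{opt_RF}) for any feasible $(\mathbf{F}_{\rm RF},\{\mathbf{F}_{\rm BB}[k]\})$; optimality of the claimed $\mathbf{F}_{\rm RF}$ follows.

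For practical channels the per-subcarrier subspaces are only approximately equal, and I would then rely on the principal-component optimality that the construction is built upon. Choosing $\mathcal{S}$ to best represent the data amounts to maximizing the captured energy $\sum_{k}\|\mathbf{A}\mathbf{A}^H\mathbf{F}_{\rm opt}[k]\|_F^2=\mathrm{tr}\!\left(\mathbf{A}^H\mathbf{F}\mathbf{F}^H\mathbf{A}\right)$ over $\mathbf{A}^H\mathbf{A}=\mathbf{I}$, which by the Rayleigh--Ritz/Ky Fan theorem is solved by the leading $N_t^{\rm RF}$ eigenvectors of $\mathbf{F}\mathbf{F}^H$, i.e.\ $\mathbf{A}=[\mathbf{U}_F]_{:,1:N_t^{\rm RF}}$; equivalently this minimizes the reconstruction error $\sum_k\|\mathbf{F}_{\rm opt}[k]-\mathbf{A}\mathbf{A}^H\mathbf{F}_{\rm opt}[k]\|_F^2$, the Eckart--Young statement underlying PCA \cite{mach_lern}. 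The main obstacle is precisely the link between the $\log\det$ rate in (\ref{opt_RF}) and this quadratic surrogate: because the water-filling $\mathbf{\widetilde{F}}_{\rm BB}[k]$ depends nonlinearly on $\mathbf{A}$ and the singular values of $\mathbf{H}[k]$ weight the directions of $\mathcal{V}$ unequally, the reduction of the rate objective to subspace capture is exact only under the common-subspace hypothesis and otherwise holds to leading order; I would therefore present the equal-subspace case as the rigorous optimality statement and Eckart--Young as the justification for the PCA choice in general. One caveat to flag is that $[\mathbf{U}_F]_{:,1:N_t^{\rm RF}}\mathbf{R}_t$ need not satisfy the constant-modulus constraint $\mathbf{F}_{\rm RF}\in\mathcal{F}_{\rm RF}$, so the proposition should be read as specifying the optimal column space, with $\mathbf{R}_t$ and a subsequent projection onto $\mathcal{F}_{\rm RF}$ absorbing the hardware realizability.
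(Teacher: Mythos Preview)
Your proposal is correct and lands on the same endpoint as the paper---that (\ref{opt_RF}) reduces to maximizing $\mathrm{Tr}(\mathbf{A}^H\mathbf{F}\mathbf{F}^H\mathbf{A})$ over semi-unitary $\mathbf{A}$, whose solution is $[\mathbf{U}_F]_{:,1:N_t^{\rm RF}}$ up to an arbitrary full-rank right factor---but the bridge from the $\log\det$ rate to that quadratic criterion is handled differently. The paper does not invoke the common-subspace result of \cite{fre_flat}; instead it applies the high-SNR approximation from \cite{OMP} (their eqs.~(12)--(14)) to replace the sum-rate in (\ref{opt}) by $\sum_k\|\mathbf{F}_{\rm opt}^H[k]\mathbf{F}_{\rm RF}\mathbf{F}_{\rm BB}[k]\|_F^2$, then drops water-filling in favor of unitary constraints (citing \cite{uni_cons}) so that $\mathbf{\widetilde{F}}_{\rm BB}[k]\mathbf{\widetilde{F}}_{\rm BB}^H[k]$ collapses to (a block of) the identity and the objective becomes exactly $\sum_k\|\mathbf{F}_{\rm opt}^H[k]\mathbf{U}_{\rm RF}\|_F^2=\mathrm{Tr}(\mathbf{U}_{\rm RF}^H\mathbf{F}\mathbf{F}^H\mathbf{U}_{\rm RF})$. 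Your route---exact optimality under the equal-subspace hypothesis, and Eckart--Young/Ky~Fan as the justification otherwise---is more transparent about where the approximation lives, and your exact case gives a stronger statement than the paper's ``$\approx$'' ever does; the paper's route, in turn, ties the surrogate directly to the rate via a standard mmWave approximation and does not need the idealized discrete-angle channel of \cite{fre_flat}. Your caveat about $\mathcal{F}_{\rm RF}$ is apt: the paper likewise ignores the constant-modulus constraint inside the proof and enforces it afterward via (\ref{angle_sel}).
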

\begin{proof}
	Following the similar steps of the equations (12)-(14) in \cite{OMP} and defining $[\mathbf{\Sigma}[k]]_{1:N_s,1:N_s}=\mathbf{\Sigma}_1[k]$, the objective function of the optimization problem in (\ref{opt}) can be approximately written as
\begin{equation}
\begin{aligned}
&\sum\nolimits_{k=1}^{K}\log_2|\mathbf{I}+\tfrac{1}{\sigma_n^2}\mathbf{H}[k]\mathbf{F}_{\rm RF}\mathbf{F}_{\rm BB}[k]
\mathbf{F}_{\rm BB}^H[k]\mathbf{F}_{\rm RF}^H\mathbf{H}^H[k]|
\\\approx&\!\sum\nolimits_{k=1}^{K}\!(\log_2\!|\mathbf{I}_{N_s}\!\!+\!\!\tfrac{1}{\sigma_n^2}
\!\mathbf{\Sigma}_1^2[k]|
\!-\!(\!N_s\!\!-\!\!||\mathbf{F}_{\rm opt}^H\![k]\mathbf{F}_{\rm RF}\mathbf{F}_{\rm BB}[k]||_F^2)\!).
\end{aligned}
\end{equation}
Therefore, the optimization problem (\ref{opt}) is equivalent to the following optimization problem
\begin{equation} \label{opt_final}
\begin{aligned}
\max\limits_{\mathbf{F}_{\rm RF},\mathbf{F}_{\rm BB}}&\sum\nolimits_{k=1}^{K}||\mathbf{F}_{\rm opt}^H[k]\mathbf{F}_{\rm RF}\mathbf{F}_{\rm BB}[k]||_F^2
\\\text{s.t. }&\mathbf{F}_{\rm RF}\in\mathcal{F}_{\rm RF},\sum\nolimits_{k=1}^K||\mathbf{F}_{\rm RF}\mathbf{F}_{\rm BB}[k]||_F^2=KN_s,
\end{aligned}
\end{equation}
where $\mathcal{F}_{\rm RF}$ is a set of feasible RF precoder satisfying constant-modulus constraint.
The objective function in (\ref{opt_final}) can be written as
\begin{equation}
\begin{aligned}
\sum\nolimits_{k=1}^{K}&\!\!\!\!\!\!
||\mathbf{F}_{\rm opt}^H\![k]\mathbf{F}_{\rm RF}\mathbf{F}_{\rm BB}[k]||_F^2
\!=\!\!\sum\nolimits_{k=1}^{K}\!\!\!\!\!\!\!\!\text{Tr}(\mathbf{F}_{\rm opt}^H\![k]\mathbf{F}_{\rm RF}\!(\mathbf{F}_{\rm RF}^H\mathbf{F}_{\rm RF}\!)\!^{-\!\frac{1}{2}}
\\&\times(\mathbf{\widetilde{F}}_{\rm BB}[k]\mathbf{\widetilde{F}}_{\rm BB}^H[k])
(\mathbf{F}_{\rm RF}^H\mathbf{F}_{\rm RF})^{-\frac{1}{2}}\mathbf{F}_{\rm RF}^H\mathbf{F}_{\rm opt}[k]).
\end{aligned}
\end{equation}
According to previous work \cite{uni_cons}, unitary constraints offer a close performance to the total power constraint while providing a relatively simple form of solution. To simplify the problem, we consider condition under unitary power constraints instead. Therefore, water-filling power allocation coefficients can be ignored. In detail, the equivalent baseband precoder $\mathbf{\widetilde{F}}_{\rm BB}[k]=[\mathbf{\widetilde{V}}[k]]_{:,1:N_s}$, which means that $\mathbf{\widetilde{F}}_{\rm BB}[k]$ is a unitary or simi-unitary matrix depending on the relationship between $N_s$ and $N_t^{\rm RF}$. When $N_s=N_t^{\rm RF}$, $\mathbf{\widetilde{F}}_{\rm BB}[k]\mathbf{\widetilde{F}}_{\rm BB}^H[k]$ is $\mathbf{I}_{N_s}$. When $N_s<N_t^{\rm RF}$, denoting the SVD of $\mathbf{\widetilde{F}}_{\rm BB}[k]=\mathbf{U}_{\rm BB}[k]\begin{bmatrix}
\mathbf{I_{N_s}} \ \mathbf{0} \end{bmatrix}^T\mathbf{V}_{\rm BB}^H[k]$, thus $\mathbf{\widetilde{F}}_{\rm BB}[k]\mathbf{\widetilde{F}}_{\rm BB}^H[k]=\mathbf{U}_{\rm BB}[k]\text{blkdiag}(\mathbf{I}_{N_s},\mathbf{0}_{N_t^{\rm RF}-N_s})\mathbf{U}_{\rm BB}^H[k]$. Therefore, the solution to the condition when $N_s=N_t^{\rm RF}$ will also suffice the condition when $N_s<N_t^{\rm RF}$. Therefore, the (\ref{opt}) goes down to $\sum_{k=1}^{K}||\mathbf{F}_{\rm opt}^H[k]\mathbf{F}_{\rm RF}(\mathbf{F}_{\rm RF}^H\mathbf{F}_{\rm RF})^{-\frac{1}{2}}||_F^2$.
Considering SVD of $\mathbf{F}_{\rm RF}=\mathbf{U}_{\rm RF}\mathbf{\Sigma}_{\rm RF}\mathbf{V}_{\rm RF}^H$, previous function can be written as
\begin{equation}\label{s_eq_RF}
\begin{aligned}
&\sum_{k=1}^{K}||\mathbf{F}_{\rm opt}^H[k]\mathbf{F}_{\rm RF}(\mathbf{F}_{\rm RF}^H\mathbf{F}_{\rm RF})^{-\frac{1}{2}}||_F^2=\sum_{k=1}^{K}
||\mathbf{F}_{\rm opt}^H[k]\mathbf{U}_{\rm RF}||_F^2
\\=&\text{Tr}(\sum_{k=1}^{K}\mathbf{U}_{\rm RF}^H\mathbf{F}_{\rm opt}[k]\mathbf{F}_{\rm opt}^H[k]\mathbf{U}_{\rm RF})
\\=&\text{Tr}(\begin{bmatrix}
\mathbf{U}_{\rm RF}^H\mathbf{F}_{\rm opt}[1] & \cdots & \mathbf{U}_{\rm RF}^H\mathbf{F}_{\rm opt}[K]
\end{bmatrix}
\begin{bmatrix}
\mathbf{F}_{\rm opt}^H[1]\mathbf{U}_{\rm RF} \\
\vdots \\
\mathbf{F}_{\rm opt}^H[K]\mathbf{U}_{\rm RF}
\end{bmatrix})
\\=&\text{Tr}(\mathbf{U}_{\rm RF}^H\mathbf{F}\mathbf{F}^H\mathbf{U}_{\rm RF})=\text{Tr}(\mathbf{U}_{\rm RF}^H\mathbf{U}_F\mathbf{\Sigma}_F^2\mathbf{U}_F^H\mathbf{U}_{\rm RF}).
\end{aligned}
\end{equation}
Since both $\mathbf{U}_{\rm RF}$ and $\mathbf{U}_F$ are unitary matrix, (\ref{s_eq_RF}) reaches the maximum only when $\mathbf{U}_{\rm RF}=\mathbf{U}_F$. Moreover, the rank of $\mathbf{F}_{\rm RF}$ is $N_t^{\rm RF}$. So, it is safe to say that $\mathbf{U}_{\rm RF}=[\mathbf{U}_F]_{:,1:N_t^{\rm RF}}\mathbf{U}_R$. Hence the optimal RF precoder can be expressed as
\begin{equation}\label{F_RF}
\mathbf{F}_{\rm RF}=[\mathbf{U}_F]_{:,1:N_t^{\rm RF}}\mathbf{U}_R\mathbf{\Sigma}_{\rm RF}\mathbf{V}_{\rm RF}^H=\mathbf{U}_F\mathbf{R}_t,
\end{equation}
where $\mathbf{R}_t=\mathbf{U}_R\mathbf{\Sigma}_{\rm RF}\mathbf{V}_{\rm RF}^H\in\mathbb{C}^{N_t^{\rm RF}
\times N_t^{\rm RF}}$ is an arbitrary full rank matrix.
\end{proof}

According to Proposition 1, we can obtain the principal
components constituting the optimal RF precoder using PCA.
Moreover, we will design the full-rank matrix $\mathbf{R}_t$ to
meet the requirement of constant-modulus constraint for RF
precoder. Specifically, by taking the constant-modulus constraint of RF precoder into account, we can design the RF precoder by solving
\begin{equation}\label{angle_sel}
\mathbf{F}_{\rm RF}=\arg\min_{|[\mathbf{X}]_{i,j}|=1/\sqrt{N_t}}||\mathbf{X}-
[\mathbf{U}_F]_{:,1:N_t^{\rm RF}}||^2_F.
\end{equation}
With the constant-modulus constrains, the set of possible $\mathbf{F}_{\rm RF}$ is actually a hypersphere in the space of $\mathbb{C}^{N_t\times N_t^{\rm RF}}$, and $[\mathbf{U}_f]_{:,1:N_t^{\rm RF}}$ is a known point in the space of $\mathbb{C}^{N_t\times N_t^{\rm RF}}$. Therefore, the optimization problem in (\ref{angle_sel}) is actually a distance minimization problem. Naturally, the solution is the point on this hypersphere sharing same direction of the know point. In other words, the solution is given by $[\mathbf{F}_{\rm RF}]_{i,j}=1/\sqrt{N_t}e^{j\angle([\mathbf{U}_f]_{i,j})}$, and $\angle(\alpha)$ denotes the phase of a complex number $\alpha$.The specific RF precoder design is summarized in Algorithm \ref{alg:PCA_AS}.
\begin{algorithm}[htb]
	\caption{ PCA-based RF Precoder Design.}
	\label{alg:PCA_AS}
	\begin{algorithmic}[1]
		\renewcommand{\algorithmicrequire}{\textbf{Input:}}
		\renewcommand\algorithmicensure {\textbf{Output:}}
		\Require
		Optimal precoder $\mathbf{F}_{\rm opt}$,
		number of RF chains $N_t^{\rm RF}$，
		number of antennas $N_t$.
		\Ensure
		RF precoder $\mathbf{F}_{\rm RF}$.
		\State $\mathbf{F}=\begin{bmatrix}\mathbf{F}_{\rm opt}[1]\ \mathbf{F}_{\rm opt}[2]\ \cdots\ \mathbf{F}_{\rm opt}[K]\end{bmatrix}$
		\State Apply SVD to $\mathbf{F}$, i.e., $\mathbf{F}=\mathbf{U}_F\mathbf{\Sigma}_F\mathbf{V}_F^H$, where $\mathbf{U}_F$ corresponds to the principal components
		\State $[\mathbf{F}_{\rm RF}]_{i,j}=\tfrac{1}{\sqrt{N_t}}e^{j\angle([\mathbf{U}_F]_{i,j})}$
	\end{algorithmic}
\end{algorithm}

When the quantization of phase shifters is considered, we assume the quantization bits are $Q$. Therefore, the phase shifters can only be chosen from the following quantized phase set $\mathcal{Q}=\{0,\frac{2\pi}{2^Q},\cdots,\frac{2\pi(2^Q-1)}{2^Q}\}$. Specifically, after obtaining the RF precoder $\mathbf{F}_{\rm RF}$, the quantization process can be realized by searching for the minimum Euclidean distance between $\angle([\mathbf{F}_{\rm RF}]_{i,j})$ and quantized phase from $\mathcal{Q}$.
\section{Hybrid Combiner Design at Receiver}
In this section, we assume that $\mathbf{F}_{\rm RF}$ and $\{\mathbf{F}_{\rm BB}[k]\}_{k=1}^K$ are fixed and seek to design the hybrid combiner to minimize the mean-square-error (MSE) between the received signal and the transmitted signal \cite{OMP}. Specifically, the optimal fully-digital minimum mean square error (MMSE) combiner can be expressed as
\begin{equation}
\begin{aligned}\label{W_opt}
\mathbf{W}_{\rm opt}^H[k]\!=\!&\mathbf{W}_{\rm MMSE}^H[k]
\!=\!\tfrac{\sqrt{\rho}}{N_s}\mathbf{F}_{\rm BB}^H[k]\mathbf{F}_{\rm RF}^H\mathbf{H}^H[k]
(\!\tfrac{\rho}{N_s}\mathbf{H}[k]\mathbf{F}_{\rm RF}
\\&\times\mathbf{F}_{\rm BB}[k]\mathbf{F}_{\rm BB}^H[k]\mathbf{F}_{\rm RF}^H\mathbf{H}^H[k]+\sigma_{\rm n}^2\mathbf{I}_{N_r})^{-1},
\end{aligned}
\end{equation}
for $1\leq k\leq K$. Denoting the signal at the receiving antenna as $\mathbf{y}\in\mathbb{C}^{N_r\times 1}$, the combiner design MSE problem is
\begin{equation}\label{MSE}
\begin{aligned}
	\min\limits_{\mathbf{W}_{\rm RF},\mathbf{W}_{\rm BB}}&\sum\nolimits_{k=1}^{K}\mathbb{E}[||\mathbf{x}[k]-\mathbf{W}_{\rm BB}^H[k]\mathbf{W}_{\rm RF}^H\mathbf{y}[k]||_2^2]
	\\\text{s.t. }&\mathbf{W}_{\rm RF}\in\mathcal{W}_{\rm RF},
\end{aligned}
\end{equation}
where $\mathcal{W}_{\rm RF}$ is a set of feasible RF precoder satisfying constant-modulus constraint. Note that if the constant-modulus
constraint in (\ref{MSE}) is removed, the solution to (\ref{MSE}) is the optimal fully-digital
MMSE combiner in (\ref{W_opt}). On the other hand, we observe that the objective
function in (\ref{MSE}) can be further expressed as
\begin{equation}\label{inner_MSE}
\begin{aligned}
	&\sum\nolimits_{k=1}^{K}\mathbb{E}[||\mathbf{x}[k]-\mathbf{W}_{\rm BB}^H[k]\mathbf{W}_{\rm RF}^H\mathbf{y}[k]||_2^2]
	\\=&\!\sum_{k=1}^{K}\!\text{Tr}(\mathbb{E}[\mathbf{x}[k]\mathbf{x}\!^H\![k]])
	\!-\!2\!\sum_{k=1}^{K}\!\mathcal{R}\{\text{Tr}(\mathbb{E}[\mathbf{x}[k]\mathbf{y}\!^H\![k]]\mathbf{W}\!_{\rm RF}\!\mathbf{W}\!_{\rm BB}[k]\!)\!\}
	\\&+\text{Tr}(\mathbf{W}_{\rm BB}^H[k]\mathbf{W}_{\rm RF}^H\mathbb{E}[\mathbf{y}[k]\mathbf{y}^H[k]]\mathbf{W}_{\rm RF}\mathbf{W}_{\rm BB}[k]).
\end{aligned}
\end{equation}
Since the optimization variables in (\ref{MSE}) are $\mathbf{W}_{\rm RF}$ and $\{\mathbf{W}_{\rm BB}[k]\}_{k=1}^K$, any term independent
with $\mathbf{W}_{\rm RF}$ and $\{\mathbf{W}_{\rm BB}[k]\}_{k=1}^K$ will not influence the outcome. Thus we add the independent term $\sum_{k=1}^{K}\text{Tr}(\mathbf{W}_{\rm opt}^H[k]\mathbb{E}[\mathbf{y}[k]\mathbf{y}^H[k]]\\\times\mathbf{W}_{\rm opt}[k])-\sum_{k=1}^{K}\text{Tr}(\mathbb{E}[\mathbf{x}[k]\mathbf{x}^H[k]])$ to the objective function (\ref{inner_MSE}). So the objective
function in (\ref{MSE}) can be rewritten as
\begin{equation}\label{problem_W}
\sum\nolimits_{k=1}^{K}||\mathbb{E}[\mathbf{y}[k]\mathbf{y}^H[k]]^{\frac{1}{2}}(\mathbf{W}_{\rm opt}[k]\!\!-\!\!\mathbf{W}_{\rm RF}\mathbf{W}_{\rm BB}[k])||_F^2
\end{equation}
where $\mathbb{E}[\mathbf{y}[k]\mathbf{y}^H[k]]=
(\rho/N_s)\mathbf{H}[k]\mathbf{F}_{\rm RF}\mathbf{F}_{\rm BB}[k]\mathbf{F}_{\rm BB}^H[k]
\cdot\mathbf{F}_{\rm RF}^H\mathbf{H}^H[k]+\sigma_{\rm n}^2\mathbf{I}_{N_r}$ and $\mathcal{W}_{\rm RF}$ is the constant-modulus constraint. Similar to the precoder design, we derive the structure of the optimal RF combiners that solves (\ref{problem_W}).
\begin{prop}
	Considering $\mathbf{W}=\begin{bmatrix}
    \mathbb{E}[\mathbf{y}[1]\mathbf{y}[1]^H]^{1/2}\mathbf{W}_{\rm opt}[1] \ \cdots \ \mathbb{E}[\mathbf{y}[K]\mathbf{y}[K]^H]^{1/2}\mathbf{W}_{\rm opt}[K]
	\end{bmatrix}$ and its SVD $\mathbf{W}=\mathbf{U}_W\mathbf{\Sigma}_W\mathbf{V}_W^H$, the solution to (\ref{problem_W}) can be written as $\mathbf{W}_{\rm RF}=[\mathbf{U}_W]_{:,1:N_r^{\rm RF}}\mathbf{R}_r$, where $\mathbf{R}_r\in\mathbb{C}^{N_r^{\rm RF}
	\times N_r^{\rm RF}}$ is an arbitrary full rank matrix.
\end{prop}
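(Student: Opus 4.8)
The plan is to mirror the structure of the proof of Proposition~1, treating the minimization of (\ref{problem_W}) as a weighted low-rank approximation whose optimal left factor turns out to be the dominant left singular subspace of the weighted data matrix $\mathbf{W}$. Denote $\mathbf{C}[k]=\mathbb{E}[\mathbf{y}[k]\mathbf{y}^H[k]]^{1/2}$, so that the $k$th block of $\mathbf{W}$ is $\mathbf{C}[k]\mathbf{W}_{\rm opt}[k]$ and the objective reads $\sum_{k=1}^{K}||\mathbf{C}[k]\mathbf{W}_{\rm opt}[k]-\mathbf{C}[k]\mathbf{W}_{\rm RF}\mathbf{W}_{\rm BB}[k]||_F^2$. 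First I would fix $\mathbf{W}_{\rm RF}$ and minimize over each $\mathbf{W}_{\rm BB}[k]$ separately, which is an ordinary least-squares problem whose solution is the orthogonal projection of $\mathbf{C}[k]\mathbf{W}_{\rm opt}[k]$ onto the column space of $\mathbf{C}[k]\mathbf{W}_{\rm RF}$. Substituting this back shows that minimizing the residual is equivalent to maximizing the captured energy $\sum_{k=1}^{K}||\mathbf{Q}[k]^H\mathbf{C}[k]\mathbf{W}_{\rm opt}[k]||_F^2$, where $\mathbf{Q}[k]$ is an orthonormal basis for the column space of $\mathbf{C}[k]\mathbf{W}_{\rm RF}$.

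Next, following the equivalent-baseband manipulation used for the precoder, I would absorb the RF-combiner Gram factor by writing $\widetilde{\mathbf{W}}_{\rm BB}[k]=(\mathbf{W}_{\rm RF}^H\mathbf{W}_{\rm RF})^{1/2}\mathbf{W}_{\rm BB}[k]$ and invoke the unitary power constraint relaxation of \cite{uni_cons} exactly as in Proposition~1, so that the effective combiner column space is represented by the common orthonormal basis $\mathbf{U}_{\rm RF}=\mathbf{W}_{\rm RF}(\mathbf{W}_{\rm RF}^H\mathbf{W}_{\rm RF})^{-1/2}$ taken identical across subcarriers. Under this relaxation the captured-energy objective collapses to a single trace, $\sum_{k=1}^{K}||\mathbf{U}_{\rm RF}^H\mathbf{C}[k]\mathbf{W}_{\rm opt}[k]||_F^2=\text{Tr}(\mathbf{U}_{\rm RF}^H\mathbf{W}\mathbf{W}^H\mathbf{U}_{\rm RF})$, by the same block-concatenation identity already used in (\ref{s_eq_RF}). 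This is the step that removes $\mathbf{C}[k]$ from the optimization variables and leaves it only inside the data matrix $\mathbf{W}$, thereby realizing the weighted-PCA interpretation announced in the abstract.

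Finally, I would substitute the SVD $\mathbf{W}=\mathbf{U}_W\mathbf{\Sigma}_W\mathbf{V}_W^H$ to obtain $\text{Tr}(\mathbf{U}_{\rm RF}^H\mathbf{U}_W\mathbf{\Sigma}_W^2\mathbf{U}_W^H\mathbf{U}_{\rm RF})$ and argue, precisely as after (\ref{s_eq_RF}), that over all $N_r\times N_r^{\rm RF}$ matrices $\mathbf{U}_{\rm RF}$ with orthonormal columns this trace is maximized when those columns span the top-$N_r^{\rm RF}$ left singular subspace of $\mathbf{W}$, i.e. $\mathbf{U}_{\rm RF}=[\mathbf{U}_W]_{:,1:N_r^{\rm RF}}\mathbf{U}_R$ for some unitary $\mathbf{U}_R$. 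Re-absorbing the singular values and right factors of $\mathbf{W}_{\rm RF}$ then gives $\mathbf{W}_{\rm RF}=[\mathbf{U}_W]_{:,1:N_r^{\rm RF}}\mathbf{R}_r$ with $\mathbf{R}_r$ an arbitrary full-rank matrix, in exact analogy with (\ref{F_RF}).

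I expect the main obstacle to be the $k$-dependence of the weighting $\mathbf{C}[k]$: because the projection subspace spanned by $\mathbf{C}[k]\mathbf{W}_{\rm RF}$ genuinely varies with $k$ (the weighting sits \emph{inside} the projector $\mathbf{Q}[k]$, not merely outside it), the per-subcarrier residuals do not aggregate exactly into a single Frobenius norm, and the clean $\mathbf{W}\mathbf{W}^H$ trace of Proposition~1 is recovered only after the common-subspace relaxation of the second paragraph. Justifying that relaxation—i.e. quantifying when replacing the $k$-dependent $\mathbf{Q}[k]$ by the common orthonormal basis $\mathbf{U}_{\rm RF}$ is (near-)optimal—is the delicate point; everything after it is a verbatim repetition of the trace-maximization argument already established for the precoder.
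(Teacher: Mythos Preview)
Your proposal follows essentially the same route as the paper: both eliminate $\mathbf{W}_{\rm BB}[k]$ by weighted least squares, convert the residual minimization into the captured-energy maximization $\sum_k\|\mathbf{Q}[k]^H\mathbf{A}[k]\|_F^2$ with $\mathbf{A}[k]=\mathbf{C}[k]\mathbf{W}_{\rm opt}[k]$ and $\mathbf{Q}[k]$ an orthonormal basis of $\mathrm{col}(\mathbf{C}[k]\mathbf{W}_{\rm RF})$ (the paper calls this $\mathbf{U}_B[k]$, the left singular vectors of $\mathbf{B}[k]$), and finish with the trace-maximization argument already established for Proposition~1. The only substantive difference is how the $k$-dependence of $\mathbf{Q}[k]$ is handled: you make an explicit common-subspace relaxation $\mathbf{Q}[k]\to\mathbf{U}_{\rm RF}=\mathbf{W}_{\rm RF}(\mathbf{W}_{\rm RF}^H\mathbf{W}_{\rm RF})^{-1/2}$, whereas the paper instead posits a factorization $\mathbf{U}_B[k]=\mathbf{R}_B[k]\mathbf{U}_W$, pushes the $k$-dependent factor onto a block-diagonal premultiplier of $\mathbf{W}$, and then asserts that this premultiplier does not alter the dominant left singular subspace. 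Both treatments are heuristic at precisely this step---neither rigorously controls the effect of the $k$-varying weight inside the projector---so your flagging of it as the delicate point is apt; note, however, that the ``unitary power constraint'' citation from Proposition~1 does not actually justify replacing $\mathbf{Q}[k]$ by $\mathbf{U}_{\rm RF}$, since that relaxation concerned the baseband power allocation, not the column space of a weighted RF matrix.
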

\begin{proof}
	Consider least square (LS) estimation for baseband combiner $\mathbf{W}_{\rm BB}[k]=(\mathbf{W}_{\rm RF}^H\mathbb{E}[\mathbf{y}[k]\mathbf{y}^H[k]]\mathbf{W}_{\rm RF})^{-1}\mathbf{W}_{\rm RF}^H\mathbb{E}[\mathbf{y}[k]\mathbf{y}^H[k]]\mathbf{W}_{\rm opt}[k]$ ($k=1,\cdots,K$). Substituting LS estimation for baseband combiner in the objective function of (\ref{problem_W}), we have
\begin{equation}
\begin{aligned}
&\sum_{k=1}^{K}||\mathbb{E}[\mathbf{y}[k]\mathbf{y}^H[k]]^{\frac{1}{2}}(\mathbf{W}_{\rm opt}[k]-\mathbf{W}_{\rm RF}\mathbf{W}_{\rm BB}[k])||_F^2
\\=&\sum_{k=1}^{K}||\mathbb{E}[\mathbf{y}[k]\mathbf{y}^H[k]]^{\frac{1}{2}}\mathbf{W}_{\rm opt}[k]-\mathbb{E}[\mathbf{y}[k]\mathbf{y}^H[k]]^{\frac{1}{2}}\mathbf{W}_{\rm RF}(\mathbf{W}_{\rm RF}^H
\\&\times\mathbb{E}[\mathbf{y}[k]\mathbf{y}^H[k]]\mathbf{W}_{\rm RF})^{-1}\mathbf{W}_{\rm RF}^H\mathbb{E}[\mathbf{y}[k]\mathbf{y}^H[k]]\mathbf{W}_{\rm opt}[k])||_F^2.
\end{aligned}
\end{equation}
Let $\mathbf{A}[k]=\mathbb{E}[\mathbf{y}[k]\mathbf{y}^H[k]]^{\frac{1}{2}}\mathbf{W}_{\rm opt}[k]$ and $\mathbf{B}[k]=\mathbb{E}[\mathbf{y}[k]\mathbf{y}^H[k]]^{\frac{1}{2}}\mathbf{W}_{\rm RF}$. The above equation can be further simplified as
\begin{equation}
\sum_{k=1}^{K}(\text{Tr}(\mathbf{A}\!^H\![k]\mathbf{A}[k]\!)
-\!\text{Tr}(\mathbf{A}\!^H\![k]\mathbf{B}[k](\mathbf{B}\!^H\![k]
\mathbf{B}[k])\!^{-1}\mathbf{B}\!^H\![k]\mathbf{A}[k])\!).
\end{equation}
So the minimization problem can be transformed into following maximization problem
\begin{equation}\label{max_W}
\max_{\mathbf{W}_{\rm RF},\mathbf{W}_{\rm BB}[k]}\sum_{k=1}^{K}\text{Tr}(\mathbf{A}^H[k]\mathbf{B}[k](\mathbf{B}^H[k]
\mathbf{B}[k])^{-1}\mathbf{B}^H[k]\mathbf{A}[k]).
\end{equation}
Assuming the SVD of $\mathbf{B}[k]=\mathbf{U}_B[k]\mathbf{\Sigma}_B[k]\mathbf{V}_B^H[k]$, the objective function of the maximization problem (\ref{max_W}) can be written as $\sum_{k=1}^{K}||\mathbf{U}_B^H[k]\mathbf{A}[k]||_F^2$.
Since $\mathbf{B}[k]=\mathbb{E}[\mathbf{y}[k]\mathbf{y}^H[k]]^{\frac{1}{2}}\mathbf{W}_{\rm RF}$, we can safe to say that there is a matrix $\mathbf{R}_B[k]\in\mathbb{C}^{N_r^{\rm RF}
\times N_r^{\rm RF}}$ satisfying $\mathbf{U}_B[k]=\mathbf{R}_B[k]\mathbf{U}_W$. So the problem can be transformed into $\sum_{k=1}^{K}||\mathbf{U}_W^H\mathbf{R}_B^H[k]\mathbf{A}[k]||_F^2$.
Note that function above is similar to the objective function of maximization problem (\ref{opt_final}), thus $\mathbf{W}_{\rm RF}=[\mathbf{U}_V]_{:,1:N_r^{\rm RF}}\mathbf{R}_v$, where
\begin{equation}
\begin{aligned}
\mathbf{V}=&[\mathbf{R}_B^H[1]\mathbf{A}[1]\ \cdots \ \mathbf{R}_B^H[K]\mathbf{A}[K]]
\\=&\text{blkdiag}(\mathbf{R}_B^H[1],\cdots,\mathbf{R}_B^H[K])\mathbf{W}.
\end{aligned}
\end{equation}
Therefore, matrix $\mathbf{U}_R$ satisfies $[\mathbf{U}_V]_{:,1:N_r^{\rm RF}}=[\mathbf{U}_W]_{:,1:N_r^{\rm RF}}\mathbf{U}_R$. So the solution to the problem (\ref{max_W}) is
\begin{equation}
\begin{aligned}
&\mathbf{W}_{\rm RF}=[\mathbf{U}_V]_{:,1:N_r^{\rm RF}}\mathbf{R}_v
\\=&[\mathbf{U}_W]_{:,1:N_r^{\rm RF}}\mathbf{U}_R\mathbf{R}_v=[\mathbf{U}_W]_{:,1:N_r^{\rm RF}}\mathbf{R}_r,
\end{aligned}
\end{equation}
where $\mathbf{R}_r\in\mathbb{C}^{N_r^{\rm RF}
\times N_r^{\rm RF}}$ is an arbitrary full rank matrix.
\end{proof}

The design of RF combiner can be extended from that of the RF precoder, where the weight $\mathbb{E}[\mathbf{y}[k]\mathbf{y}^H[k]]$ should be considered according to MMSE criterion. The RF combiner design is provided in Algorithm \ref{alg:w_PCA_AS}.
\begin{algorithm}[htb]
	\caption{ Weighted PCA-based RF Combiner Design.}
	\label{alg:w_PCA_AS}
	\begin{algorithmic}[1]
		\renewcommand{\algorithmicrequire}{\textbf{Input:}}
		\renewcommand\algorithmicensure {\textbf{Output:}}
		\Require
		Optimal combiners $\{\mathbf{W}_{\rm opt}[k]\}_{k=1}^K$,
        covariance matrices of the received signals $\{\mathbb{E}[\mathbf{y}[k]\mathbf{y}^H[k]]^{1/2}\}_{k=1}^K$,
		the number of RF chains $N_r^{\rm RF}$,
		and the number of antennas $N_r$.
		\Ensure
		RF combiner $\mathbf{W}_{\rm RF}$.
        \State $\mathbf{W}_w[k]=\mathbb{E}[\mathbf{y}[k]\mathbf{y}^H[k]]^{1/2}\mathbf{W}_{\rm opt}[k]$, for $k=1,\cdots,K$
        \State $\mathbf{W}=\begin{bmatrix}\mathbf{W}_w[1]\ \mathbf{W}_w[2]\ \cdots\ \mathbf{W}_w[K]\end{bmatrix}$
		\State Apply SVD to $\mathbf{W}$, i.e., $\mathbf{W}=\mathbf{U}_W\mathbf{\Sigma}_W\mathbf{V}_W^H$, where $\mathbf{U}_W$ corresponds to the principal components
		\State $[\mathbf{W}_{\rm RF}]_{i,j}=\frac{1}{\sqrt{N_r}}e^{j\angle([\mathbf{U}_W]_{i,j})}$
	\end{algorithmic}
\end{algorithm}
When consider quantization of phase shifters, the entries of $\angle([\mathbf{W}_{\rm RF}]_{i,j})$ are substituted by the phase in $Q$ bits quantization phase set $\mathcal{Q}$ with minimum Euclidean distance.

Furthermore, the design of baseband combiners $\{\mathbf{W}_{\rm BB}[k]\}_{k=1}^K$ are different from the design of baseband precoders $\{\mathbf{F}_{\rm BB}[k]\}_{k=1}^K$, since the power constraint is removed for the receive hybrid combiner. Specifically, $\{\mathbf{F}_{\rm BB}[k]\}_{k=1}^K$ is designed according to water-filling algorithm given $\mathbf{F}_{\rm RF}$, while $\{\mathbf{W}_{\rm BB}[k]\}_{k=1}^K$ is designed by using weighted least squares (LS) according to fully-digital MMSE combiners $\{\mathbf{W}_{\rm opt}[k]\}_{k=1}^K$ and frequency-flat RF combiner $\mathbf{W}_{\rm RF}$. The detailed design of $\{\mathbf{W}_{\rm BB}[k]\}_{k=1}^K$ can be summarized in Algorithm \ref{alg:comb}.
\begin{algorithm}[htb]
	\caption{ Baseband Combiner Design.}
	\label{alg:comb}
	\begin{algorithmic}[1]
		\renewcommand{\algorithmicrequire}{\textbf{Input:}}
		\renewcommand\algorithmicensure {\textbf{Output:}}
		\Require
		Optimal combiners $\{\mathbf{W}_{\rm opt}[k]\}_{k=1}^K$,
		RF combiner $\mathbf{W}_{\rm RF}$,
		RF precoder $\mathbf{F}_{\rm RF}$,
        baseband precoders $\{\mathbf{F}_{\rm BB}[k]\}_{k=1}^K$,
		channel matrices $\{\mathbf{H}[k]\}_{k=1}^K$,
		and expectation of the received signal $\{\mathbb{E}[\mathbf{y}[k]\mathbf{y}^H[k]]\}_{k=1}^K$.
		\Ensure
		Baseband combiners $\{\mathbf{W}_{\rm BB}[k]\}_{k=1}^K$.
		\For{$k=1:K$}
		\State $\mathbf{A}=(\mathbf{W}_{\rm RF}^H\mathbb{E}[\mathbf{y}[k]\mathbf{y}^H[k]]\mathbf{W}_{\rm RF})^{-1}$
        \State $\mathbf{W}_{\rm BB}[k]=\mathbf{A}\mathbf{W}_{\rm RF}^H\mathbb{E}[\mathbf{y}[k]\mathbf{y}^H[k]]\mathbf{W}_{\rm opt}[k]$
		\State $\mathbf{\Lambda}_{\rm eq}=\text{diag}{\{\mathbf{W}_{\rm BB}^H[k] \mathbf{W}_{\rm RF}^H\mathbf{H}[k]\mathbf{F}_{\rm RF}\mathbf{F}_{\rm BB}[k]\}}^{-1}$
		\State $\mathbf{W}_{\rm BB}[k]=\mathbf{W}_{\rm BB}[k]\mathbf{\Lambda}_{\rm eq}$
		\EndFor
	\end{algorithmic}
\end{algorithm}

\section{Simulations}
In this section, we will investigate the SE and BER performance for the hybrid precoder/combiner design for the backhaul/fronthaul channel. For the backhaul/fronthaul channel model, we adopt Dirac delta function as the pulse shaping filter and a cyclic prefix with the length of $D=64$. The number of subcarriers is $K=512$. We consider that the path delay is uniformly distributed in $[0,DT_s]$ ($T_s=1/B$ is the symbol period). The number of the clusters is $N_{\rm cl}=8$, and azimuth/elevation AoAs and AoDs follow the uniform distribution $\mathcal{U}[-\pi/2, \pi/2]$ with angle spread of $7.5^{\circ}$. Within each cluster, there are $N_{\rm ray}=10$ rays. As for the antennas, we consider both the transmitter and receiver adopt $8\times8$ UPA, and the distance between each adjacent antennas is half wavelength. 
Moreover, we consider the number of RF chains at transmitter and receiver are $N_t^{\rm RF}=N_r^{\rm RF}=4$ and the data stream is $N_s=3$ unless otherwise stated.

Throughout this part, following baselines will be considered
for performance benchmarks: \textbf{Optimal fully-digital} scheme considers the fully-digital MIMO system, where the SVD-based precoder/combiner is adopted as the performance upper bound. \textbf{Simultaneous OMP (SOMP)} scheme is an extension version of the narrow-band OMP-based spatially sparse precoding in \cite{OMP}. In broadband, SOMP-based hybrid precoding scheme can simultaneously design the RF precoder/combiner for all subcarriers. \textbf{Discrete Fourier transform (DFT) codebook} scheme designs the RF precoder/combiner from the DFT codebook instead of steering vectors codebook in SOMP scheme \cite{DFT_cb}.

\begin{figure}[htb]
    \centering
    \includegraphics[width=0.8\columnwidth, keepaspectratio]{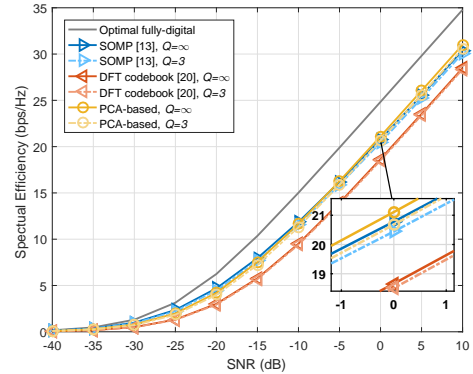}
	\caption{SE performance comparison of different hybrid precoding schemes, where both transmitter and receiver employ $8\times8$ UPA, $N_t^{\rm RF}=N_r^{\rm RF}=4$, and $K=512$.}\label{fig_SE}
\end{figure}
In Fig. \ref{fig_SE}, we evaluate SE performance of the proposed PCA-based hybrid precoding scheme, SOMP-based hybrid precoding scheme, and DFT codebook-based hybrid precoding scheme, where both hybrid precoder and combiner are jointly investigated, and $Q=\infty$ and $Q=3$ are considered, respectively. In Fig. \ref{fig_SE}, the proposed PCA-based hybrid precoding scheme outperforms the SOMP-based hybrid precoding scheme and DFT codebook-based hybrid precoding scheme. The DFT codebook-based hybrid precoding scheme works poorly due to the quantization loss of the DFT codebook with limited size. Finally, it can also be observed from Fig. \ref{fig_SE} that the influence of quantization of phase shifters can be negligible for our scheme.

\begin{figure}[htb]
    \centering
    \includegraphics[width=0.8\columnwidth, keepaspectratio]{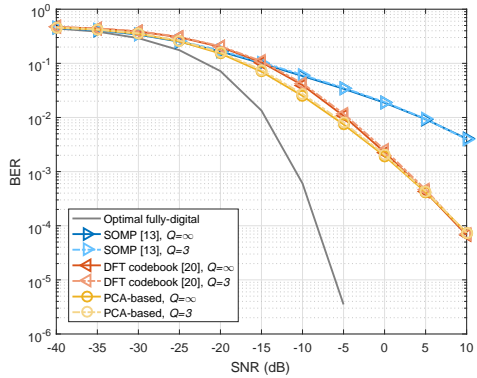}\\
	\caption{BER performance comparison of different hybrid precoding schemes, where both transmitter and receiver employ $8\times8$ UPA, $N_t^{\rm RF}=N_r^{\rm RF}=4$, and $K=128$.}\label{fig_BER}
\end{figure}
In Fig. \ref{fig_BER}, we evaluate the BER performance of the proposed PCA-based hybrid precoding scheme, SOMP-based hybrid precoding scheme, and DFT codebook-based hybrid precoding scheme, where both hybrid precoder and combiner are jointly investigated, 16 QAM is adopted for transmission, and $Q=\infty$ and $Q=3$ are considered, respectively. For simplicity, only $K=128$ subcarriers are considered in the system. As shown in Fig. \ref{fig_BER}, the proposed PCA-based hybrid precoding scheme outperforms the conventional SOMP-based and DFT codebook-based hybrid precoding schemes in BER performance. Meanwhile, the SOMP-based hybrid precoding scheme has the worst BER performance, especially at high SNR. When BER$=10^{-2}$ is considered, we can observe that the proposed scheme outperforms the DFT-codebook-based hybrid precoding scheme and the SOMP-based hybrid precoding scheme by approximately 2 dB and 11 dB, respectively.

\section{Conclusions}
This paper has proposed a PCA-based hybrid precoder
and combiner for wideband mmWave FD-MIMO backhaul/fronthaul channels. To design the precoder, we use the PCA algorithm to extract the principal components from the optimal fully-digital precoders of all subcarriers and choose its phase angles as the angles of RF precoder. Moreover, the RF combiner at the receiver can be built by implementing the weighted PCA, and baseband combiner can be designed by using the weighted LS. Simulations further verify both the better SE and BER performance of the proposed scheme than conventional schemes. The scheme proposed in this paper can also be used in cellular communications. However, smaller receive antenna array and multiple users are often considered in cellular communications, which will be investigated in future work.


\section*{Acknowledgment}
This work was supported by the National Natural Science Foundation of China (Grant Nos. 61471037, 61701027, and 61201181), the Beijing Natural Science Foundation (Grant No. 4182055), Huawei Innovation Research Program (HIRP), and Youth Project of China Academy of Information and Communications Technology.




%

\end{document}